\documentclass[11pt]{article}

\usepackage{amsmath,amssymb,amstext,amsthm,url,mathptmx, url}
\usepackage[usenames]{color}

\renewcommand{\ge}{\geqslant}
\renewcommand{\le}{\leqslant}
\newcommand{\poly}{\text{poly}}
\newcommand{\trdeg}{\text{trdeg}}
\newcommand{\spa}{\text{span}}

\usepackage{amsthm}
\theoremstyle{plain}
\newtheorem{theorem}{Theorem}
\newtheorem{conjecture}{Conjecture}

\newtheorem{definition}{Definition}
\newtheorem{lemma}{Lemma}

\theoremstyle{remark}
\newtheorem*{remark}{Remark}

\begin{document}


\title{PIT for depth-$4$ circuits and Sylvester-Gallai conjecture for polynomials}

\author{Alexey Milovanov \\
almas239@gmail.com}

\maketitle

\begin{abstract}
This text is a development of  preprint \cite{gupta}. 

We present an approach for devising a deterministic polynomial time blackbox identity testing (PIT) algorithm for depth-$4$ circuits with bounded top fanin.
This approach is similar to Kayal-Shubhangi \cite{ks} approach for depth-$3$ circuits. Kayal and Shubhangi  based their algorithm on Sylvester-Gallai-type theorem about linear polynomials. We show how it is possible to generalize this approach to depth-$4$ circuits. However we failed to implement this plan completely. We succeeded to construct a polynomial time deterministic algorithm for depth-$4$ circuits with bounded top fanin and its correctness requires a hypothesis. Also we present a  polynomial-time (unconditional) algorithm for some subclass of   depth-$4$ circuits with bounded top fanin.
\end{abstract}

\section{Introduction}

\textbf{Polynomial Identity Testing} : 
In blackbox polynomial identity testing (PIT), given only
query access to a hidden circuit, one has to determine if it outputs the zero polynomial.
In whitebox PIT one has to solve the same problem with possibility to see a circuit.

This problem has numerous applications and has appeared in many fundamental results in
complexity theory. Although this problem exhibits a trivial randomized algorithm, designing
an efficient deterministic algorithm is one of the most challenging open problems. Strong
equivalence results between derandomizing PIT and proving super-polynomial circuit lower
bounds for explicit polynomials are known (cf. Chapter 4 of \cite{sy}).

\textbf{Depth-$4$ Circuits} : In a surprising result, Agrawal-Vinay \cite{av} showed that a complete
derandomization of PIT for just depth-4 ($\Sigma \Pi \Sigma \Pi$) circuits implies an exponential lower bound
for general circuits and a near complete derandomization of PIT for general circuits of poly-degree.  Hence the problem of derandomizing PIT for such fanin
restricted depth-$4$ circuits is equivalent to the general case. 

There has been an incredibly large number of results for $\Sigma \Pi \Sigma \Pi$-circuits with diverse restrictions.
 A study for the case in which the bottom fan-in of such
depth-$4$ circuits is at most $1$ ( $\Sigma \Pi \Sigma$ circuits) was initiated by Dvir-Shpilka \cite{ds}
(whitebox) and Karnin-Shpilka \cite{ksh} (blackbox). A different study for the case with the
restriction of bounded transcendence degree was initiated by Beecken et al. \cite{bms}. Recently, Agrawal
et al. \cite{asss} reproved all these diverse results using a single unified technique based on
the Jacobian criterion. In allmost all these results, the fanin of the top $+$ gate is assumed
to be $O(1)$. For details see the survey by Shpilka-Yehudayoff \cite{sy} or the one by Saxena\cite{s}.

\textbf{The Model} : In this work we consider the model of $\Sigma \Pi \Sigma \Pi(k, r)$ circuits over $\mathbb{C}$, the field
of complex numbers. We first define $\Sigma \Pi \Sigma \Pi(k)$ circuits. These are circuits having four
alternating layers of  and   gates where the fanin of the top  gate is $k$. Such a circuit
alternating layers of $+$ and $\times$ gates where the fanin of the top $+$ gate is $ \le k$. Such a circuit
$C$ computes a polynomial of the form
\begin{equation}
\label{eq1}
C(x_1 , \ldots  , x_n ) = \sum_{i=1}^k F_i =\sum_{i=1}^k \prod_{j=1}^{d_i} l_{ij}
\end{equation}
where $d _i $ are the fanins of the $\times$ gates at the second level. Define $\text{gcd}(C) := \text{gcd}(F_1 , \ldots , F_k )$.   A circuit
is called \emph{simple} if $\text{gcd}(C) = 1$. 
The polynomial computed by a $\Sigma \Pi \Sigma \Pi(k, r)$  circuit $C$ has the same form as in \eqref{eq1}
with  added restriction that the degree of every $l_{ij}$ is at most $r$. As $l_{ij}$  can have at most $r$
irreducible factors, we can factor $l_{ij}$ while incurring a multiplicative factor of $r$ in  $d_i$.
Hence, the polynomial computed by a $\Sigma \Pi \Sigma \Pi(k, r)$ circuit $C$ is of the form
\begin{equation}
\label{eq2}
C(x_1 , \ldots  , x_n ) = \text{gcd}(C) \cdot \sum_{i=1}^k F_i =\sum_{i=1}^k \prod_{j=1}^{d'_i} l'_{ij}
\end{equation} 
where gcd($C$) is a product of polynomials of degree at most  $r$ and $l'_{ij}$ are irreducible. Such a circuit is said to be homogenous if all $F_i$  are homogenous  of
the same degree (and therefore $l'_{ij}$ are also homogenous).

\section{Results}
To state the result we first  need to introduce some notions  from incidence geometry.
\subsection*{Sylvester-Gallai type problems}
  A well-known theorem in incidence geometry called
the Sylvester-Gallai (SG) theorem states that : if there are $n$ distinct points on the real
plane such that, for every pair of distinct points, the line through them also contains a third point,
then they all lie on the same line. Over several decades, various variants of this result have
been proved and are in general called Sylvester-Gallai type problems. Informally, in such
problems, one is presented with a set of objects (points, hyperplanes, etc.) with a lot of
``local'' dependencies (e.g. two points are collinear with a third) and the goal is to translate
these local restrictions to a global bound (usually on the dimension of the space spanned by
the objects). Recently, in an impressive work by Barak et al.\cite{bdwy}, a robust variant
of the SG theorem was proved which among other things says that, even if for every point, the
above stated restriction holds for a constant fraction of other points, one can still bound the
dimension of the vector space spanned by the point set in $\mathbb{C}^d$ by a constant. Few other lines
of study for the SG type problems include
\begin{itemize}
\item replacing lines by higher dimensional vector spaces (initiated by Hansen),
\item having multiple sets of (colored) points (initiated by Motzkin-Rabin),
\item robust/fractional versions of the above (initiated by Barak et al.).
\end{itemize}
For an introduction to the SG theorem and its variants see the survey by Borwein-Moser \cite{bm}.
One interesting feature of \cite{bdwy} is that the robust variant of SG theorem was motivated
 by a problem in theoretical computer science, in particular the study of (linear) Locally
Correctable Codes. A common feature of all these variants is that they only consider flats/vector spaces/linear varieties.
Ankit Gupta and the author propose a new line of SG theorems for non-linear polynomials. 
These problems arise very naturally in our approach for devising PIT algorithms for $\Sigma \Pi \Sigma \Pi(k, r)$ circuits.

The SG theorem can be restated in terms of polynomials as follows: let $l_1 , \ldots , l_m$ be distinct homogenous linear polynomials in 
$\mathbb{R}[x_0 , \ldots , x_n ]$ s.t. for every pair of distinct $l_i$ , $l_j$ there is a
distinct $l_k$ s.t.  $ l_k $ belongs to the ideal $\langle l_i, l_j \rangle$. Then dimension of the vector space spanned by all $l_n$ is
at most $1$.

The dimension of the vector space spanned by a set of linear polynomials is a special case of the general concept of transcendence degree of a set of polynomials .
Polynomials $f_1, \ldots , f_m \subset  \mathbb{C}[x_1,  \ldots , x_n ]$ are called algebraically independent if there is no
non-zero polynomial
$F$
such that $F(f_1, \ldots, f_m)=0$.
The transcendence degree   $\trdeg_{\mathbb{C}}  \{f_1, \ldots , f_m \}$ is the maximal number
$r$ of algebraically independent polynomials in the set.

\begin{definition} 
 A simple  homogenous $\Sigma \Pi \Sigma \Pi(k)$ circuit $C$ such that
\begin{equation}
\label{sg}
C := \sum_{i=1}^{i=k} F_i = \sum_{i=1}^k \prod_{j=1}^{d_i} l_{ij}
\end{equation}
  as stated in Equation (\ref{eq1}) is \emph{SG} if for every $i \in \{1,\ldots, k\}$ and 

for every
$l_{1 j_1}, l_{2 j_2} \ldots, l_{ i-1, j_{i-1}}, l_{ i+1, j_{i+1}}, \ldots,   l_{ k, j_{k}}$ 

the ideal $\langle l_{1 j_1}, l_{2 j_2} \ldots, l_{ i-1, j_{i-1}}, l_{ i+1, j_{i+1}}, \ldots,   l_{ k, j_{k}} \rangle$ contains $F_i$.
\end{definition}
 
Our motivation behind terming such circuits as SG comes from Dvir-Shpilka’s idea of using
variants of the SG theorem for bounding the dimension of the vector space spanned by the
linear forms occurring (at the third layer) in such circuits in the case  the bottom
fanin is at most $1$, i.e., it is a $\Sigma \Pi \Sigma \Pi(k,1)$ circuit. They also conjectured that, if $\mathbb{F}$ has
characteristic $0$ then, this dimension is bounded by a function of only $k$. Indeed later,
Kayal-Saraf \cite{ks} used a colored higher-dimensional variant of the SG theorem to prove this
conjecture for $\mathbb{R}$. In spirit of Dvir-Shpilka \cite{ds} we conjecture that in such SG-$\Sigma \Pi \Sigma \Pi(k,r)$
circuits the transcendence degree of the set of $l_{ij}$ is bounded by a function of $k$, $r$.

\begin{conjecture}
\label{con}
Let $C$ be a  $\Sigma \Pi \Sigma \Pi(k,r)$ circuit of the form (\ref{sg}). If $C$ is SG then $\trdeg_{\mathbb{C}} \{l_{ij}\}  \le \lambda(k, r )$ for some function $\lambda$.

\end{conjecture}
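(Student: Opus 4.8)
The plan is to transplant the strategy that settles the linear case (Dvir--Shpilka, Kayal--Saraf) to the non-linear setting. That strategy has two ingredients: first, turn the algebraic SG hypothesis on the circuit into a system of purely geometric ``capturing'' incidences among the irreducible factors $l_{ij}$; second, prove a non-linear, coloured Sylvester--Gallai theorem that converts such incidences into a bound on transcendence degree. Accordingly I would split the argument into Stage~1 (reduction) and Stage~2 (the non-linear SG theorem), the latter being essentially the whole difficulty.

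\textbf{Stage 1: from the circuit to a coloured incidence configuration.} Fix $i$ and a transversal $l_{1 j_1},\ldots,\widehat{l_{i j_i}},\ldots,l_{k j_k}$ (one factor from each $F_m$, $m\neq i$), and put $I=\langle l_{1 j_1},\ldots,\widehat{l_{i j_i}},\ldots,l_{k j_k}\rangle$. The SG hypothesis gives $\prod_j l_{ij}=F_i\in I$, hence $F_i$ vanishes on $V(I)$; decomposing $V(I)=\bigcup_t Z_t$ into irreducible components, on each $Z_t$ some factor $l_{i j}$ vanishes, i.e.\ $Z_t\subseteq V(l_{i j})$. The crucial point is that by a B\'ezout-type bound the number of components $Z_t$, and the sum of their degrees, is at most $r^{k-1}$, \emph{independent of $n$}; likewise each $Z_t$ is cut out set-theoretically by polynomials of degree at most some $D(k,r)$. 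Colouring $l_{ij}$ by its index $i\in\{1,\ldots,k\}$, one extracts the following coloured incidence data: for every choice of one polynomial from each of $k-1$ colour classes, every irreducible component of their common zero set is contained in $V(R)$ for some $R$ of the remaining colour. For $k=3$ this is a non-linear, coloured analogue of an Edelstein--Kelly configuration; for general $k$ it is a $k$-coloured higher-codimension variant.

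\textbf{Stage 2: the non-linear coloured SG theorem.} The target is: if $\mathcal L_1,\ldots,\mathcal L_k$ are sets of irreducible polynomials of degree $\le r$ satisfying the incidence data above (plus the coprimality/simplicity coming from $\text{gcd}(C)=1$), then $\trdeg_{\mathbb C}(\mathcal L_1\cup\cdots\cup\mathcal L_k)\le\lambda(k,r)$. For $r=1$ the $l_{ij}$ are linear forms, $V(R)$ is a hyperplane through the origin, and a component of the intersection of $k-1$ such is a linear subspace, so the incidence data says precisely that some linear form of the remaining colour lies in the span of the chosen ones; this is the coloured/robust linear Sylvester--Gallai statement over $\mathbb{C}$ known from the work of Kayal--Saraf and Barak et al., giving the base case $\lambda(k,1)$. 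For $r\ge 2$ the natural hope is an induction on $r$ (and possibly on $k$): first apply the standard transcendence-degree reduction — restrict the whole configuration to a generic linear subspace of dimension $\lambda(k,r)+1$, which preserves all incidences and makes every B\'ezout quantity from Stage~1 genuinely $n$-free — and then, by restricting to the hypersurfaces $V(l_{ij})$ or to generic hyperplane sections, try to produce a configuration of strictly smaller parameters to which the inductive hypothesis applies.

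\textbf{Main obstacle.} The bottleneck is exactly this reduction for $r\ge 2$. Unlike in the linear case, the component $Z_t$ captured by $l_{ij}$ is a variety of codimension $\ge 2$ and degree up to $r^{k-1}$ with possibly several pieces, so there is no UFD/span mechanism that isolates the contribution of a single $l_{ij}$; moreover restricting to a hyperplane section does not lower the degree of the $l_{ij}$, so there is no obvious way to decrease $r$. Controlling the \emph{transcendence degree} of the $l_{ij}$ (rather than a linear-algebraic dimension) through these intersections seems to require a genuinely new structural statement about the varieties $V(P,Q)$ for $P,Q$ of degree $\le r$ — and it is precisely at this point that, as announced in the abstract, an additional hypothesis is currently needed to finish the argument.
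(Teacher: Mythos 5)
You have not proved the statement, and you say so yourself: the entire content of Conjecture~\ref{con} is your ``Stage~2'' non-linear coloured SG theorem for $r\ge 2$, which you leave open. That is consistent with the paper, which also does not prove the conjecture in general: it only establishes the special case of Theorem~\ref{th_pr_con} (for $\Sigma\Pi\Sigma\Pi(3,2)$ circuits in which every ideal $\langle l_{ik},l_{jt}\rangle$ generated by quadratics from different branches is prime, and the quadratics do not all sit in one branch). So the verdict is: genuine gap, namely Stage~2, and the gap coincides with the open problem itself rather than with a fixable oversight.

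Two more concrete remarks on your plan. First, in Stage~1 you immediately pass from $F_i\in I$ to the geometric statement that $F_i$ vanishes on $V(I)$ and then work with irreducible components; this only retains membership of $F_i$ in the radical $\sqrt{I}$, i.e.\ you silently replace the paper's hypothesis by Gupta's strictly weaker radical-ideal version, which the paper deliberately avoids (see the Remark after Theorem~\ref{Main}). The paper's partial progress exploits genuine ideal membership: in the proof of Theorem~\ref{th_pr_con}, primeness of $\langle l_1,l_2\rangle$ for quadratics $l_1,l_2$ turns ``a quadratic lies in the ideal'' into ``a quadratic is a linear combination of $l_1,l_2$'', which is what feeds the Edelstein--Kelly-type Theorem~\ref{main} and the rank-of-quadratic-form argument bounding $\dim(\spa(Q))$ and $\dim(\spa(L))$. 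By discarding this you make your target strictly harder than Conjecture~\ref{con}. Second, your base case appeal for $r=1$ is essentially right (this is the coloured linear SG machinery of Kayal--Saraf), but the proposed induction on $r$ via hyperplane sections has no mechanism for decreasing $r$ or for isolating a single factor's contribution, exactly as you concede; if you want unconditional partial results in the spirit of the paper, the productive direction is the one taken in Section~\ref{proof_con}: impose structural hypotheses (primeness of the pairwise ideals, distribution of the quadratics over several branches) under which ideal membership collapses to linear-algebraic statements about spans and ranks, and then run the Kelly/Edelstein--Kelly argument over $\mathbb{C}$.
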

For the case $r = 1$ this conjecture reduces to the case $c = 1$ by the irreducibility of vector
spaces and was first proved over $\mathbb{R}$ in \cite{ks}.
We are now ready to state our first result for PIT.
\begin{theorem} 
\label{Main}
Given  white-box access to  a $\Sigma \Pi \Sigma \Pi(k,r)\text{-circuit}$ $f \in \mathbb{C}[x_0, \ldots, x_n]$ of degree $d$, the identity test for $f$ can be decided deterministically in time $\text{poly}(n, d)$ for constant $k$ and $r$ if $C$ is not SG. Moreover, if Conjecture~\ref{con} holds then the same is true even if $C$ is SG.
\end{theorem}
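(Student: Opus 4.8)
The plan is to reduce the zero-test of $f$ to that of a simple homogeneous $\Sigma\Pi\Sigma\Pi(k,r)$ circuit, relate the zeroness of the latter to the SG property, and then treat the two cases separately. First I would homogenize $f$ by introducing one fresh variable: this keeps the circuit in the class, makes it homogeneous, and preserves whether it vanishes identically. Then I would compute $\text{gcd}(f)=\text{gcd}(F_1,\dots,F_k)$; since every bottom form has degree at most $r=O(1)$ it can be split into irreducible factors by deterministic polynomial-time multivariate factorization, and the gcd is then read off from the factor multisets. Dividing it out produces a \emph{simple} homogeneous $\Sigma\Pi\Sigma\Pi(k,r)$ circuit $C=\sum_{i=1}^k F_i=\sum_i\prod_j l_{ij}$ with $f\equiv 0\iff C\equiv 0$, so it suffices to decide whether $C\equiv 0$ in time $\poly(n,d)$.

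The structural fact I would isolate is: \emph{if $C\equiv 0$ then $C$ is SG}. No Nullstellensatz is needed, only a direct ideal computation. Fix $i$, choose one bottom factor $l_{i'j_{i'}}$ of each $F_{i'}$ with $i'\neq i$, and set $I=\langle l_{1j_1},\dots,l_{i-1,j_{i-1}},l_{i+1,j_{i+1}},\dots,l_{kj_k}\rangle$. For each $i'\neq i$ the generator $l_{i'j_{i'}}$ divides $F_{i'}$, so $F_{i'}\in I$; summing, $\sum_{i'\neq i}F_{i'}\in I$; and since $C=F_i+\sum_{i'\neq i}F_{i'}\equiv 0$ we get $F_i=-\sum_{i'\neq i}F_{i'}\in I$, which is precisely the SG condition. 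Hence \emph{if $C$ is not SG then $C\not\equiv 0$, so $f\not\equiv 0$}, and outputting ``nonzero'' is correct.

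This handles the unconditional half once SG-ness can be tested in polynomial time. For constant $k$ there are only $k\prod_{i'\neq i}d_{i'}\le k\,d^{k-1}=\poly(d)$ conditions, each of the form ``$F_i\in I$'' with $I$ generated by $k-1$ forms of degree $\le r$. For $r=1$, $I$ is the ideal of a linear subspace (prime, hence radical), so $F_i=\prod_j l_{ij}\in I$ iff some factor $l_{ij}$ lies in the linear span of the generators --- a $\poly(n)$-size linear-algebra test. For $r>1$ one uses that $V(I)$ has only $O(1)$ irreducible components, each of bounded degree, to reduce ``$F_i\in I$'' to a bounded number of bounded-degree membership queries about the individual irreducible factors $l_{ij}$; making this reduction fully rigorous and polynomial-time for general $r$ is, I expect, the most delicate point of this half. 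If the test fails we output ``nonzero''.

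For the conditional half, assume the SG test succeeds. Then Conjecture~\ref{con} gives $t:=\trdeg_{\mathbb C}\{l_{ij}\}\le\lambda(k,r)=O(1)$, placing us in the bounded-transcendence-degree regime handled by the Jacobian-based machinery of \cite{bms,asss}: over $\mathbb C$ (characteristic $0$) one constructs, deterministically in $\poly(n,d)$ time, a homomorphism $\varphi$ into a ring in $O(t)=O(1)$ variables that is faithful to $\{l_{ij}\}$, i.e.\ $\trdeg_{\mathbb C}\{\varphi(l_{ij})\}=t$. Faithfulness of $\varphi$ --- itself checkable, since the relevant Jacobian minor after applying $\varphi$ is a polynomial in $O(1)$ variables of $\poly(n,d)$ degree --- implies $C\equiv 0\iff\varphi(C)\equiv 0$, and $\varphi(C)$ has $O(1)$ variables and degree $\le d$, so its vanishing is decided by evaluation on a grid of $(d+1)^{O(1)}=\poly(d)$ points. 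Putting the pieces together: the genuinely new ingredient is the implication ``$C\equiv 0\Rightarrow C$ is SG'' together with (conditionally) Conjecture~\ref{con}; the rest recycles standard factorization, linear algebra, and faithful-homomorphism technology, and the only real obstacle beyond Conjecture~\ref{con} itself is the polynomial-time SG test in the general case $r>1$.
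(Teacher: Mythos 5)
Your overall architecture coincides with the paper's: homogenize and divide out the gcd, observe that $C\equiv 0$ forces $F_i=-\sum_{i'\neq i}F_{i'}\in\langle l_{1j_1},\dots,l_{kj_k}\rangle$ so that a non-SG circuit is automatically nonzero, test SG-ness by ideal-membership queries, and in the SG case invoke Conjecture~\ref{con} to get constant transcendence degree and finish with Theorem~\ref{trd} of Beecken--Mittmann--Saxena. The conditional half and the $r=1$ membership test are fine and are exactly what the paper does.

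The genuine gap is the step you yourself flag: a polynomial-time test of ``$F_i\in I$'' for $r>1$, where $F_i$ is a product of \emph{unboundedly many} degree-$\le r$ factors and $I$ is generated by $k-1$ forms of degree $\le r$. The paper fills this with Theorem~\ref{mathoverflow} (Dao): if $\prod_j l_{ij}\in I$ then already some subproduct of at most $f(k,r)$ of the factors lies in $I$; this rests on uniform (Stillman-type) bounds on a primary decomposition of $I$ and on the exponent $B$ with $\mathfrak p^B\subseteq I$ for a primary ideal (Theorem~\ref{math}). Given that, one enumerates the $\poly(n,d)$ subproducts of bounded degree and checks each membership by solving a linear system in the relevant graded piece, exactly as in your $r=1$ case. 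Your proposed substitute --- using that $V(I)$ has $O(1)$ components of bounded degree and reducing to ``bounded-degree membership queries about the individual irreducible factors'' --- does not do this: facts about $V(I)$ and about individual factors only capture membership in the \emph{radical} of $I$, not in $I$ itself (e.g.\ $xy\in\sqrt{\langle x^2,y^2\rangle}$ but $xy\notin\langle x^2,y^2\rangle$), and ideal membership of a long product cannot in general be read off factor-by-factor when $I$ is not prime. A radical-based test would certify Gupta's radical-SG notion instead of the paper's SG notion, so the conditional half would then need Gupta's stronger conjecture rather than Conjecture~\ref{con} --- losing precisely the improvement the paper's Remark claims. So the missing idea is the bounded-subproduct membership theorem (and its primary-decomposition bounds); without it your SG test is not established.
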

\begin{remark}
In \cite{gupta} Ankit Gupta gives  another definition of SG-circuit (he uses radical ideals instead of usual ideals). Our approach is in some sense better: we obtain a similar result as in \cite{gupta} under a weaker conjecture.  
\end{remark}
Our second result is a proof of Conjecture~\ref{con} in a special case---Theorem ~\ref{th_pr_con} in Section~\ref{proof_con}. Also we obtain a full deradomization of PIT for some subclass of $\Sigma \Pi \Sigma \Pi(3, 2)$---Theorem~\ref{derand} in Section~\ref{der}.

\section{Case $\Sigma \Pi \Sigma \Pi(3,1)$-circuits}
\label{naive}
Here we present our idea of derandomization for  $\Sigma \Pi \Sigma \Pi(3,1)$-circuits, i.e., polynomials of the form $F_1 + F_2 +F_3$, where every $F_i$ is a product of linear homogeneous polynomials $l_{i1},  l_{i2} \ldots $ 

We can assume w.l.og. that $F_1$, $F_2$ and $F_3$ are pairwise coprime. Indeed, if, say, $(F_1, F_2) \not=1$ then  either $(F_1, F_2) | F_3 $ and we can devide all $F_i$ by $(F_1, F_2)$, or if $F_3$ does not divide $(F_1, F_2)$ then  $F_1 + F_2 + F_3$ is not identiacally zero. If $F_1 + F_2 + F_3  = 0$ then $F_3 \in \langle F_1, F_2 \rangle$ and hence $F_3 \in \langle l_{11}, l_{21} \rangle$. Can we verify the last belonging effectively? The answer is ``yes''.  First, note that  $ \langle l_{11}, l_{21} \rangle$ contains $F_3=l_{31} \cdot l_{32} \ldots$ iff there exists $i$ such that $l_{3i} \in \langle l_{11}, l_{21} \rangle$ because the ideal $ \langle l_{11}, l_{21} \rangle$ is prime.  Now note that we can easily verified whether    $l_{3i} \in \langle l_{11}, l_{21} \rangle$     for every $i$ in polynomial time.

So, we can verify that $\langle l_{1j}, l_{2j} \rangle$ contains $F_3$ for every $i$ and $j$. Similar  for $F_1$ and $F_2$. Assume that we have verified all this and does not find contradictions with zero indentity of $F_1 + F_2 + F_3$. Does this means that this polynomial is zero? No! A conter-example is $F_1:=x$, $F_2:=y$, $F_3:=x + 2y$.

However, the following result shows that in this case the dimension spanned by $l_{ij}$ is at most $4$.  Hence, it is easy to determine the identity of the circuit by Schwartz-Zippel lemma.
\begin{theorem}
		\label{main}
		If $\{S_i\}$ is a finite collection of two or more non-empty disjoint
		finite sets in an affine or in a projective
		complex space such that $\bigcup S_i$
		spans a subspace of at least dimension $5$, then there exists a line 
		 cutting precisely two of the sets. 
	\end{theorem}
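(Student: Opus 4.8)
The plan is to argue by contraposition: assuming that \emph{no} line cuts precisely two of the sets, I will show that $\bigcup_i S_i$ spans a subspace of dimension at most $4$. Adjoining a hyperplane at infinity in the affine case, we may work in complex projective space $\mathbb{CP}^d$; and since for $m=2$ any line through a point of $S_1$ and a point of $S_2$ already cuts precisely those two sets, the hypothesis forces $m\ge 3$. Writing $P:=\bigcup_i S_i$, the hypothesis says that every line meeting at least two of the $S_i$ meets at least three of them, and therefore carries at least three points of $P$; so $P$ can fail to be a Sylvester--Gallai configuration only on account of \emph{monochromatic} lines, those whose $P$-points all lie in a single $S_i$.

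The core of the plan is to destroy the monochromatic lines by a projection and then invoke Kelly's theorem on complex Sylvester--Gallai configurations. Fix a color class, say $S_1$, and a point $a\in S_1$, and let $\pi\colon\mathbb{CP}^d\to\mathbb{CP}^{d-1}$ be the projection away from $a$ (defined on $P\setminus\{a\}$); put $\bar P:=\pi(P\setminus\{a\})$. Since $\dim\spa(\bar P)\ge\dim\spa(P)-1$, it suffices to show that $\bar P$ is a Sylvester--Gallai configuration, for then Kelly's theorem bounds $\dim\spa(\bar P)$ by an absolute constant. A line $\bar L$ through two distinct points $\bar x=\pi(x)$ and $\bar y=\pi(y)$ of $\bar P$ lifts to the $2$-plane $\Pi:=\spa\{a,x,y\}$, and a third point of $\bar P$ on $\bar L$ is exactly a point of $P\cap\Pi$ lying off the two lines $\overline{ax}$ and $\overline{ay}$. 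I would then split into cases according to the colors of $x$ and $y$: whenever $x$ and $y$ are not both in $S_1$, a short incidence chase inside $\Pi$---take a polychromatic line there (the line $\overline{xy}$, or $\overline{xy}$ together with the third-color points the hypothesis forces onto $\overline{ax}$ and $\overline{ay}$), apply the hypothesis to get a point in a new color, and use that two distinct lines of a plane meet in one point so this point misses $\overline{ax}\cup\overline{ay}$---produces the required third point.

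The genuinely delicate case is $x,y\in S_1$, and it is there that I expect the real difficulty to lie. One first shows that if $\bar L$ carried only the points $\bar x,\bar y$ of $\bar P$, then the forced third-color points make every pair of $P$-points of $\overline{ax}\cup\overline{ay}$ other than $a$ monochromatic, whence $P\cap\Pi\subseteq S_1$; thus every ``bad'' line of $\bar P$ is confined to $\pi(\spa(S_1))$. To close this gap I would add a secondary argument: choose $a$ in the color whose span has smallest dimension, split $\bar P$ into the part lying in $W:=\pi(\spa(S_1))$ and the part outside $W$---through the latter every line carries a third point of $\bar P$, so it is itself a Sylvester--Gallai configuration, while the part inside $W$ sits in lower ambient dimension and is handled by induction---and combine the two dimension bounds; alternatively, if a complex analogue of the Edelstein--Kelly theorem is available in the literature, the statement follows from it directly. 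Making this splitting-and-recursion airtight, and seeing precisely why it cannot push the dimension past $4$, is the step I regard as the main obstacle; everything else is incidence bookkeeping together with the black-box use of Kelly's complex Sylvester--Gallai theorem.
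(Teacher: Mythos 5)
There is a genuine gap, and you in effect concede it: the whole difficulty of the theorem sits in the case you defer, namely monochromatic lines through your projection centre (both $x,y\in S_1$), and your proposed repair is not carried out. Two concrete problems with the repair as sketched. First, the claim that the part of $\bar P$ outside $W=\pi(\spa(S_1))$ ``is itself a Sylvester--Gallai configuration'' does not follow from what you establish: the third point of $\bar P$ forced onto a line through two points outside $W$ may well lie inside $W$, so the outside part need not be SG on its own, and the dimension bookkeeping of the ``split and recurse'' step (why the recursion terminates with total dimension at most $4$, uniformly in the number of colour classes and in $\dim\spa(S_1)$) is exactly what is missing. Second, your fallback --- citing a complex Edelstein--Kelly theorem from the literature --- is circular: the statement you are asked to prove \emph{is} the complex Edelstein--Kelly statement, and the paper's point is precisely to derive it from Kelly's complex Sylvester--Gallai theorem (Theorem~\ref{kel}).

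The paper's proof avoids your bad case by never projecting from a single (monochromatic) point. It picks two points $p_1,p_2$ in \emph{different} sets and works with the pencil of $2$-planes whose axis is the bichromatic line $p_1p_2$. Because $\bigcup S_i$ spans dimension at least $5$, a section of this pencil by a suitable $4$-space is a pencil of lines not all in a $3$-plane; Kelly's theorem applied to a $3$-plane section (``Motzkin's observation'') yields two lines of the pencil whose plane contains none of the others, hence a $3$-space $\Gamma$ in which all points of $\bigcup S_i$ lie on just two planes of the original pencil. An elementary case analysis inside $\Gamma$, which crucially uses that the axis already carries two colours, then exhibits a line meeting exactly two of the sets. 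So the only black box is Kelly's theorem, used once on a section, and no claim that any projected configuration is globally SG is ever needed. If you want to salvage your approach, you would have to replace projection from a point by projection from (or sectioning along) a bichromatic line, which is essentially the paper's argument; as written, your proposal does not yet constitute a proof.
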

	In fact the proof of Theorem~\ref{main} is closely follows the proof  of Edelstein-Kelly theorem in \cite{ek}. We just use the following  result of Kelly instead of Sylvester-Gallai theorem.
	\begin{theorem}[\cite{kelly}]
		\label{kel}
		If a finite set of $k > 2$ points in an affine or in a projective
		complex space  is not a subset of a plane, then there exists a line in that space containing
		precisely two of the points.
	\end{theorem}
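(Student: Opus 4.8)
This is Serre's complex analogue of the Sylvester--Gallai theorem (Kelly's theorem), and I would reconstruct a proof along Kelly's lines. Pass to projective space --- an ordinary line of an affine configuration stays ordinary in its projective closure --- and argue by contraposition: suppose $S$ is a finite set of $k>2$ points spanning $\mathbb{CP}^N$ with $N\ge 3$ and having no \emph{ordinary} line, i.e.\ no line meeting $S$ in exactly two points, and derive a contradiction.

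First I would reduce to $N=3$. The $\binom{k}{2}$ lines joining pairs of $S$ have union of dimension $1$, and the at most $\binom{k}{3}$ planes spanned by non-collinear triples of $S$ have union of dimension $2$; hence, when $N\ge 4$, a generic point $p$ lies off both. Projecting from $p$ onto $\mathbb{CP}^{N-1}$ is then injective on $S$, keeps every old collinear triple collinear, and creates no collinear triple out of a non-collinear one, so the image still spans $\mathbb{CP}^{N-1}$ and still has no ordinary line. Iterating, I may assume $S$ spans $\mathbb{CP}^3$.

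Next I would descend to the plane: fix $p\in S$, project from $p$ onto a plane $H\cong\mathbb{CP}^2$, and let $\bar S$ be the image of $S\setminus\{p\}$, which spans $H$ because $S$ spans $\mathbb{CP}^3$. Since $S$ has no ordinary line, every line through $p$ and a point of $S$ carries a third point of $S$, so each point of $\bar S$ is the image of at least two points of $S$, collinear with $p$. I would then check that an ordinary line $\overline{\bar q\,\bar r}$ of $\bar S$ lifts to an ordinary line of $S$: the plane through $p$ lying over it meets $S$ only in $p$ together with the two clusters over $\bar q$ and over $\bar r$, and a point of the first cluster joined to a point of the second spans a line all of whose candidate third points of $S$ --- $p$, or a further point of one of those clusters --- are excluded. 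So $\bar S$ is a non-collinear subset of $\mathbb{CP}^2$ with no ordinary line.

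The hard part is that this is not yet a contradiction: over $\mathbb{C}$ the plane does admit such sets --- the Hesse configuration of the nine inflection points of a plane cubic. To finish I would bring in Hirzebruch's inequality for complex line arrangements; dualizing $\bar S$ to an arrangement with $t_2=0$ it reads $\tfrac{3}{4}t_3\ge n+\sum_{r\ge5}(r-4)t_r$, which confines the arrangement to the narrow family meeting it with (near-)equality --- essentially the Hesse-type configurations --- and one must then show that no such rigid planar configuration can arise as the projection, from \emph{any} $p\in S$, of a set genuinely spanning $\mathbb{CP}^3$ with every image point covered at least twice. Matching the Hesse combinatorics against all these $3$-dimensional lifts is the main obstacle; everything before it is routine projection bookkeeping. (If one only wants the weaker statement that $S$ spans $O(1)$ dimensions, this last step can instead use the Dvir--Saraf--Wigderson design-matrix rank bound: the relations $a\,v_i+b\,v_j+c\,v_{k(i,j)}=0$ from collinear triples assemble into a matrix $A$ with $AV=0$, so $\operatorname{rank}A\le k-(N+1)$, while $A$ is a design matrix, giving $\operatorname{rank}A\ge k-O(1)$.)
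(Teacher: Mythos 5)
A preliminary remark: the paper does not prove Theorem~\ref{kel} at all; it is quoted from \cite{kelly} and used as a black box in the proof of Theorem~\ref{main}. So your proposal is being measured against Kelly's published argument, not against anything in this text. Your two reduction steps are correct and are indeed how that argument begins: a generic projection (from a point off all secant lines and off all planes of non-collinear triples) takes a spanning set with no ordinary line in $\mathbb{CP}^N$, $N\ge 4$, to one in $\mathbb{CP}^{N-1}$, so one may assume $N=3$; and projecting from a point $p\in S$ gives a planar image $\bar S$ which spans the plane, has no ordinary line, and has every fiber over a point of $\bar S$ containing at least two points of $S$. Your verification that an ordinary line of $\bar S$ would lift to an ordinary line of $S$ is also sound.

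The gap is exactly where you locate it, and the route you sketch for closing it does not work. With $t_2=0$, Hirzebruch's inequality does \emph{not} confine the dual arrangement to a narrow family of near-equality, Hesse-type configurations: the Fermat (Ceva) arrangements $(x^m-y^m)(y^m-z^m)(z^m-x^m)=0$ with $m\ge 3$ have $t_2=0$, $t_3=m^2$, $t_m=3$, and satisfy the inequality with slack growing like $m^2$; dually they give complex planar Sylvester--Gallai configurations of unbounded size $3m$. So there is no finite list of planar obstructions to match against $3$-dimensional lifts, and ``classify the near-equality cases and rule out their lifts'' cannot be the finishing move. Kelly's actual proof does use Hirzebruch's inequality, but quantitatively, in a counting argument that feeds in the spatial information you have set up (the multiplicity-$\ge 2$ fibers over the projected points, equivalently the structure of the plane sections through $p$) to reach a numerical contradiction; that counting is the substance of the theorem and is absent from your sketch. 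Your parenthetical design-matrix alternative is coherent but, as you say, only yields a dimension bound $O(1)$, which is strictly weaker than Theorem~\ref{kel} as stated (though one should note that such a weaker bound would still support a version of Theorem~\ref{main}, and hence of the paper's applications, with worse constants).
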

	\begin{proof}[Proof of Theorem~\ref{main}]
		First note that a pencil of lines in an affine or a projective $4$-space, not all in the same $3$-dimensional plane must contain a pair of lines such that the plane defined by these lines contains none of the other lines.
		This follows at once if we consider a section of the pencil by a $3$-dimensional plane and
		appeal to  Theorem~\ref{kel} in the $3$-dimensional plane of the section. We call this fact Motzkin's observation since he observed it for $\mathbb{R}$ in \cite{motzkin}. 
		
		We now choose a pair of points $p_1$ and $p_2$ of $\bigcup S_i$ where $p_1$ and $p_2$ are from
		different $S_i$.
		 The points of $\bigcup S_i \setminus \{p_1, p_2\}$
		 define a pencil of $2$-dimensional planes with line
		$p_1 p_2$ as axis.  A section of this pencil by a properly chosen $4$-space defines a
		pencil of lines in that $4$-space not all in a plane. (Indeed, since points of $\bigcup S_i$ do not belongs to any $4$-space there exist points $A, B, C, D$ such that the vectors $p_2 A$, $p_2 B$, $p_2 C$, $p_2 D$, $p_2 p_1$ are linearly independent. The $4$-space $p_2A,p_2B, p_2C, p_2D$ is suitable for us.) By the Motzkin's observation, two of the lines of this pencil define a $2$-plane free of any of the other lines of
		the pencil. This plane together with the points $p_1$ and $p_2$ spans a $3$-space $\Gamma$
		such that the points of $\bigcup S_i$
		in this $3$-space are on precisely two $2$-planes of the
		original pencil of $2$-planes. Each of these planes contains at least one point of $\bigcup S_i \setminus \{p_1, p_2\}$.
		
		Now it is easy  to check that if a collection of two or more finite
		non-empty and disjoint sets in a $3$-dimensional space
		lie on two planes and not on one, then there
		is a line intersecting precisely two of the sets. Indeed, denote these planes as $\alpha$ and $\beta$. If there exist two points  from  $\alpha \bigcup \beta \setminus \{p_1, p_2\}$ from different sets then the lines that connect these points is what we want. Else we  consider any line that connects some point from    $\alpha \bigcup \beta \setminus \{p_1, p_2\}$ and $p_1$ or $p_2$.
\end{proof}
\section{General case $\Sigma \Pi \Sigma \Pi(k, r)$-circuits}	
Now we will try to use the same idea for general 		$\Sigma \Pi \Sigma \Pi(k, r)$-circuits. To simplify notation we  consider $\Sigma \Pi \Sigma \Pi(3,2)$-circuits. So, we consider the circuits of the form $F_1 + F_2 +F_3$, where $F_i$ is a product of linear or quadratic (irredicuble) homogenuos polynomials $l_{i1} l_{i2} \ldots $. We can assume that $F_1$, $F_2$ and $F_3$ are pairwise coprime by the same reasons as before.  Again, we want to verify whether $F_3 \in \langle l_{11}, l_{21} \rangle$. However,  it is not as simple as in the previous section. Membership of $F_3 \in \langle l_{11}, l_{21} \rangle$ does not mean that there exists $l_{3i}$ such that $l_{3i} \in  \langle l_{11}, l_{21} \rangle$. We use the following analogue of this statement.
\begin{theorem}[\cite{mo}]
\label{mathoverflow}
Let $P_1,\ldots, P_d, Q_1, \ldots, Q_k \in \mathbb{C}[x_0,\ldots, x_n]$ be homogenous polynomials of degree at most $r$.

Assume that $P_1 \cdot P_2 \cdots P_{d-1} \cdot P_d \in \langle Q_1, \ldots, Q_k \rangle$. 

Then there exist $\{ i_1, \ldots, i_f \} \subseteq \{1, \ldots, d\}$, where  $f= f(k,r)$ such that
the polynomial $P_{i_1} \cdots P_{i_f} \in I:= \langle Q_1, \ldots, Q_k \rangle$.
\end{theorem}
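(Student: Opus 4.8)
The plan is to reduce the statement to a Noetherian/compactness argument over the family of all ideals generated by at most $k$ homogeneous polynomials of degree at most $r$, combined with a primary decomposition argument to handle the membership of a product in such an ideal. First I would fix $k$ and $r$ and consider the (infinite) collection of ideals $I = \langle Q_1,\ldots,Q_k\rangle$ with $\deg Q_i \le r$. The key observation is that the conclusion we want is ``local'' in the following sense: if $P_1\cdots P_d \in I$, then by looking at a primary decomposition $I = \bigcap_{s} \mathfrak{q}_s$ with radicals $\mathfrak{p}_s = \sqrt{\mathfrak{q}_s}$, for each $s$ the product $P_1\cdots P_d$ lies in $\mathfrak{q}_s$, hence (since $\mathfrak{q}_s$ is primary) some $P_{j(s)}$ lies in $\mathfrak{p}_s$, and in fact a bounded power $P_{j(s)}^{e_s} \in \mathfrak{q}_s$ where $e_s$ is the exponent of $\mathfrak{q}_s$. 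Taking one such index $j(s)$ for each primary component and multiplying, we would get that $\prod_s P_{j(s)}^{e_s} \in \bigcap_s \mathfrak{q}_s = I$; so the surviving index set has size at most the number of primary components, and the total degree is controlled by $\sum_s e_s$.

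The crux is therefore to bound, uniformly over all such $I$, both the number of associated primes and the exponents $e_s$ appearing in a primary decomposition. This is exactly where effective commutative algebra enters: by the effective Nullstellensatz and known bounds on primary decomposition (e.g.\ the degree bounds of Hermann, or the more refined bounds of Seidenberg/Gianni--Trager--Zacharias and their modern descendants), the number of associated primes of an ideal generated by polynomials of degree $\le r$ in $n$ variables is at most some $D(n,r)$, and the exponents are likewise bounded by some $E(n,r)$. The subtlety is that these naive bounds depend on $n$, whereas the statement demands $f = f(k,r)$ independent of $n$. To remove the dependence on $n$ I would invoke the standard trick that an ideal generated by $k$ polynomials of degree $\le r$ effectively lives in a polynomial ring in at most $N = N(k,r)$ variables: after a generic linear change of coordinates the ``essential'' variables (those appearing after eliminating linear relations, combined with a bound on the number of variables needed to witness all the algebra among the $Q_i$ and the resulting primary components) number at most something like $kr$, and membership $P_1\cdots P_d \in I$ can be tested after restricting to this bounded-dimensional subspace. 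Concretely, one replaces $\mathbb{C}[x_0,\ldots,x_n]$ by the subring generated by $Q_1,\dots,Q_k$ and enough auxiliary coordinates, so that all of Hermann's bounds become functions of $k$ and $r$ alone.

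I expect the main obstacle to be precisely this elimination of the ambient dimension $n$ from the bound: the primary decomposition of $I$ a priori involves primes that need not be ``defined over'' the small subring, and one must argue carefully that for the purpose of detecting which $P_j$ survive, only a bounded-dimensional piece of the geometry matters. One clean way around this is to work projectively and use the fact that the radical $\sqrt{I}$ defines a variety which is a finite union of irreducible components, each cut out (set-theoretically) by at most $N(k,r)$ equations of bounded degree, together with a Noether-normalization argument to put each component in a coordinate space of dimension $O(kr)$; the exponents $e_s$ are then bounded by an effective Briançon--Skoda / Kollár-type bound that again only sees the degree $r$ and the (bounded) codimension. Once the uniform bounds $D = D(k,r)$ and $E = E(k,r)$ are in hand, we set $f(k,r) := D\cdot E$ and the argument of the first paragraph finishes the proof: we keep, for each of the $\le D$ primary components, one index $P_j$ with $P_j^{\le E}$ in that component, and the product of these (at most $D$ distinct) factors, each raised to a power $\le E$, lies in $I$; absorbing repeated factors and reindexing gives a subset of size $\le f(k,r)$ whose product lies in $I$.
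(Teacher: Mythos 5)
Your overall skeleton --- primary decomposition of $I$, one contribution per component, bounded exponents, then intersect --- is the same as the paper's, but the way you propose to make the bounds independent of $n$ does not work, and this is exactly where the whole difficulty of the statement sits. The ``standard trick'' you invoke is false: an ideal generated by $k$ forms of degree at most $r$ cannot in general be assumed, after a generic linear change of coordinates, to involve only $N(k,r)$ variables; already a single full-rank quadric $x_0^2+\cdots+x_n^2$ needs all $n+1$ variables, since the minimal number of linear forms in which a quadratic form can be written is its rank. Likewise the classical effective tools you name (Hermann/Seidenberg-type bounds for primary decomposition, Noether normalization, Koll\'ar or Brian\c{c}on--Skoda exponent bounds) either carry a dependence on $n$ (for Hermann, doubly exponential) or control only radical membership and minimal primes; they do not bound the number of associated primes (embedded ones included), nor produce a primary decomposition whose components have degrees, numbers of generators and exponents bounded by $k$ and $r$ alone. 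That $n$-free uniformity is precisely the deep input here: it comes from the Ananyan--Hochster resolution of Stillman's conjecture and its refinements, which the paper imports as Theorem~\ref{math} (Proposition 4.6 of Erman--Sam--Snowden, together with Hochster--Huneke): there is a primary decomposition $I=I_1\cap\dots\cap I_l$ with $l$ and the degrees and numbers of generators of each $I_i$ bounded in terms of $k,r$, and for such a primary ideal the exponent $B$ with $\mathfrak p^B\subseteq I_i$ is bounded in terms of $k,r$. Without citing a result of this type, your argument does not close.

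There is also a smaller defect in the combinatorial step. The theorem asks for a product over a \emph{subset} of indices, each $P_i$ appearing once, whereas you produce $\prod_s P_{j(s)}^{e_s}$ with powers; ``absorbing repeated factors'' is not legitimate, since $P_j^{e}\in\mathfrak q_s$ does not imply that $P_j$ times other, unrelated factors lies in $\mathfrak q_s$. The repair is exactly the paper's argument: within one $\mathfrak p$-primary component, discard every $P_i\notin\mathfrak p$ (their product is outside the prime $\mathfrak p$, so by primaryness the product of the remaining $P_i$ stays in the component); if more than $B$ factors remain, any $B$ of them already multiply into $\mathfrak p^{B}$, which is contained in the component. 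Taking the union of these index sets over the boundedly many components gives a genuine subset of size at most $l\cdot B=f(k,r)$ whose product lies in $I$. With these two corrections your proposal becomes essentially the paper's proof.
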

The proof of this theorem was given by Hailong Dao at MathOverflow~\cite{mo}. We present it here for the convenience of the reader.
\begin{proof}
The point is that many invariants of the ideal $I=(Q_1,\dots,Q_k)$ can be bounded depending only on $k$ and $r$:
\begin{theorem}[\hbox{\cite[Proposition 4.6]{ess}}, \cite{ful98}, \cite{eis95}, \cite{hh}] 
\label{math}
\textup{ }
\begin{enumerate}
\item There exists a  primary decomposition of $I = I_1\cap\dots \cap I_l$ such that each of the $I_i$ is $\mathfrak p_i$-primary and the number of generators of $I_i$ as well as degrees and $l$ itself are bounded by some function of $k$ and $r$.

\item If $I$ is $\mathfrak p$-primary then the minimal $B$ such that $\mathfrak p^B\subseteq I$ is upper bounded by some function from $k$ and $r$.
\end{enumerate}
\end{theorem}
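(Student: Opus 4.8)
The plan is to read Theorem~\ref{math} as two effectivity statements in commutative algebra, where the only genuine difficulty is uniformity in the number of ambient variables. A priori, invariants such as Gr\"obner-basis degrees or the complexity of a primary decomposition blow up doubly exponentially in the number of variables (the Mayr--Meyer phenomenon), so the crux is to exploit that $I=(Q_1,\dots,Q_k)$ has only $k$ generators. Write $R=\mathbb{C}[x_0,\dots,x_n]$. By Krull's height theorem $\mathrm{codim}(I)\le k$, so every minimal prime of $I$ has codimension at most $k$, and $V(I)\subseteq\mathbb{P}^n$ is a set-theoretic intersection of at most $k$ hypersurfaces of degree at most $r$.

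For part (1), I would first bound the number and the degrees of the associated primes of $I$. By a refined B\'ezout-type inequality and arithmetic-degree estimates (see \cite{ful98}), the quantity $\mathrm{adeg}(I)=\sum_{\mathfrak{p}\in\mathrm{Ass}(R/I)}\mathrm{length}_{R_{\mathfrak{p}}}(R_{\mathfrak{p}}/IR_{\mathfrak{p}})\cdot\deg\mathfrak{p}$ is bounded by a function of $k$ and $r$ (of order $r^{\min(k,n)}\le r^{k}$). Since each summand is at least $1$, this bounds both the number $l$ of associated primes and each degree $\deg\mathfrak{p}_i$. It then remains to bound, for each $i$, the number of generators and the degrees of a $\mathfrak{p}_i$-primary component $I_i$ in a decomposition $I=I_1\cap\dots\cap I_l$. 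Here I would put $R/\mathfrak{p}_i$ in Noether position by a generic linear change of coordinates, so that the primary component is determined by data confined to at most $k$ coordinate directions, and then invoke the effective primary-decomposition bounds of \cite[Proposition 4.6]{ess}, \cite{eis95}, \cite{hh}: these are doubly exponential in the relevant number of variables, but that number is now $\le k$, so the resulting bound depends only on $k$ and $r$.

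For part (2), suppose $I$ is $\mathfrak{p}$-primary, so $\mathfrak{p}=\sqrt{I}$. By part (1) we may write $\mathfrak{p}=(\mu_1,\dots,\mu_g)$ with $g=g(k,r)$ and $\deg\mu_j\le\rho(k,r)$. Each $\mu_j\in\sqrt{(Q_1,\dots,Q_k)}$, so by a quantitative Hilbert Nullstellensatz (K\'ollar, Brownawell), whose exponent depends only on the generator degree $r$ and on $\min(k,n)\le k$, there is $D=D(k,r)$ with $\mu_j^{D}\in I$ for all $j$. By pigeonhole, every degree-$N$ monomial in the $\mu_j$ with $N\ge g(D-1)+1$ has some exponent $\ge D$ and hence lies in $I$; therefore $\mathfrak{p}^{\,g(D-1)+1}\subseteq I$, and $B\le g(D-1)+1$ is a function of $k$ and $r$ alone.

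The main obstacle is exactly this independence from the number of variables. One cannot simply reduce to few variables by a linear substitution --- a single quadratic form can genuinely involve arbitrarily many variables --- and in the raw effective bounds (Gr\"obner bases, primary decomposition, Nullstellensatz exponents) the ambient dimension enters doubly exponentially. What rescues the argument is that $I$ has only $k$ generators, which forces low codimension, bounded degrees for all (minimal and embedded) associated primes, and a bounded ``essential'' number of variables after Noether normalization; turning this qualitative observation into uniform quantitative bounds is precisely the technical content of \cite{ess}, \cite{eis95}, \cite{hh}, \cite{ful98}. Tracking the genericity of the coordinate change over $\mathbb{C}$ and keeping degree control through the normalization step is the delicate point; the rest is bookkeeping.
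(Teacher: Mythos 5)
You should first note that the paper itself does not prove Theorem~\ref{math} at all: it is imported wholesale from \cite[Proposition 4.6]{ess}, \cite{ful98}, \cite{eis95}, \cite{hh}, and its whole point is Stillman-type uniformity in the number of variables. Measured against that, your part (1) has two genuine gaps. The refined B\'ezout inequality of \cite{ful98} controls only the isolated part of the intersection: the sum of the degrees of the \emph{irreducible components} of $V(I)$ is at most $r^{\min(k,n)}$. It does not bound the arithmetic degree; embedded primes are invisible to B\'ezout, and bounding their number, degrees and the complexity of the corresponding primary components uniformly in $n$ is precisely the content of the theorem, not an input to it. Similarly, Krull's height theorem bounds the height of \emph{minimal} primes by $k$; embedded primes of a $k$-generated ideal can have larger height, and the fact that their height is bounded by a function of $k,r$ at all is (via Auslander--Buchsbaum) essentially Stillman's conjecture, i.e.\ the theorem you are trying to prove. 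Even for a minimal prime, small codimension does not mean the component is ``confined to at most $k$ coordinate directions'' after a generic linear change of coordinates --- a single irreducible quadric has codimension $1$ but involves all $n+1$ variables in every coordinate system, as you yourself concede in your closing paragraph --- so the step where you invoke effective primary-decomposition bounds ``in $\le k$ variables'' is unsupported and circular. The actual mechanism in \cite{ess} (and in Ananyan--Hochster) is different: one shows the $Q_i$ lie in a polynomial subalgebra generated by a regular sequence of boundedly many forms of bounded degree (bounded ``strength''/small subalgebras), over which the ambient ring is faithfully flat and primary decomposition base-changes; this cannot be replaced by a linear substitution plus Noether position.

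For part (2), the effective Nullstellensatz idea is reasonable, but as written it needs a generating set $\mathfrak p=(\mu_1,\dots,\mu_g)$ with $g=g(k,r)$ \emph{independent of $n$}. Part (1) as stated bounds generators of the primary components $I_i$, not of their radicals, and a uniform bound on the number of generators of an associated prime is again a statement of the same depth (for instance, ruling out that the irrelevant maximal ideal, which needs $n+1$ generators, is associated to $I$ already requires Stillman-type bounds). You could repair this step by invoking Koll\'ar's ideal-theoretic form of the effective Nullstellensatz, $(\sqrt I)^{N}\subseteq I$ with $N$ bounded by a product of at most $\min(k,n)$ of the generator degrees (with the known refinements needed when the degrees are $\le 2$), which yields part (2) directly without any generating set of $\mathfrak p$; this would then be an honest alternative to the paper's bare citation of \cite{hh} and \cite{ess}. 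As it stands, however, both parts of your argument presuppose the uniformity in $n$ that the theorem asserts.
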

By the first item of this theorem the problem reduces to the case when $I$ is $\mathfrak p$-primary. 
By the second item  there is  $B$ such that $\mathfrak p^B\subseteq I$, and this number  is also  bounded by the degrees and number of generators of $I$.  Remove all the $P_i$  that is not in $\mathfrak p$. The product of the rest is still in $I$ because $I$ is a primary ideal. If there are at most $B$ elements remaining, we are done. If not, then choose $B$ of them, the product is in $\mathfrak p^B\subseteq I$.
\end{proof}
By this theorem it is simple to recognize   membership of $F_3$ in $\langle l_{11}, l_{21} \rangle$ by a polynomial time-bounded algorithm  that proves the first part of Theorem~\ref{Main}. The second part of this theorem follows from the following  result. 
\begin{theorem}[\cite{bms}]
\label{trd}
Let $C$ be an $m$-variate circuit. Let $f_1,\ldots, f_m$ be $l$-sparse,
$\delta$-degree, $n$-variate polynomials with trdeg $r$. Suppose we have oracle access to the
$n$-variate $d$-degree circuit 
$C′:=C(f_1, \ldots , f_m).$ 
There is a blackbox $\poly(size(C′) \cdot d l \delta)^r$
time test to check $C′= 0$ over $\mathbb{C}$.
\end{theorem}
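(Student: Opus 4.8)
\emph{Proof plan.} The plan is to follow \cite{bms} (see also \cite{asss}): reduce an identity test whose inputs have small transcendence degree to one in few variables, via a \emph{faithful homomorphism}. Write $g:=C'=C(f_1,\dots,f_m)$ and $r:=\trdeg_{\mathbb{C}}\{f_1,\dots,f_m\}$. The first step is a reduction lemma. Call a $\mathbb{C}$-algebra homomorphism $\varphi\colon\mathbb{C}[x_1,\dots,x_n]\to\mathbb{C}[\bar y]$ (with $\bar y$ a fresh tuple) \emph{faithful} if $\trdeg_{\mathbb{C}}\{\varphi(f_1),\dots,\varphi(f_m)\}=r$; I claim that then $g=0$ if and only if $\varphi(g)=0$. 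The forward implication is immediate. For the converse, assume $g\ne 0$. Faithfulness supplies an $r$-element $S\subseteq\{f_i\}$ with $\varphi(S)$ algebraically independent, and any algebraic relation among $S$ would push forward through $\varphi$ to one among $\varphi(S)$, so $S$ is itself algebraically independent, hence a transcendence basis of $\{f_i\}$; thus $g$ is algebraic over $\mathbb{C}(S)$. Pick a nonzero annihilator $A(\bar z,z_0)$ of $(S,g)$ of least degree in $z_0$; since $g\ne 0$ and $\mathbb{C}[\bar x]$ is a domain, minimality forces $z_0\nmid A$, so $A(\bar z,0)\not\equiv0$. Applying $\varphi$ to $A(S,g)=0$ and supposing $\varphi(g)=0$ would make $A(\bar z,0)$ a nonzero polynomial vanishing on the algebraically independent set $\varphi(S)$, a contradiction; hence $\varphi(g)\ne 0$.

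The second step is to produce an explicit family $\mathcal{F}$ of candidate homomorphisms into $\mathbb{C}[y_1,\dots,y_r]$, at least one of which is faithful. Over $\mathbb{C}$ the Jacobian criterion identifies $r$ with the rank over $\mathbb{C}(\bar x)$ of the matrix $(\partial f_i/\partial x_j)$, so there is a nonzero $r\times r$ minor $M(\bar x)$ of degree $\le r(\delta-1)$ and sparsity $\poly(l,\delta)^{O(r)}$. For $\varphi$ of the form $x_i\mapsto\sum_{j=1}^{r}a_{ij}y_j$, the chain rule turns the Jacobian of the images into $(\partial f_i/\partial x_j)|_{x=A\bar y}\cdot A$, and faithfulness becomes the nonvanishing in $\mathbb{C}[\bar y]$ of a fixed $r\times r$ minor $M'(\bar a,\bar y)$ of this product; from $M\not\equiv0$ one gets $M'\not\equiv0$, with $\deg M'=\poly(r,\delta)$ and sparsity $\poly(n,l,\delta)^{O(r)}$. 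I would then derandomize the choice of the $a_{ij}$ exactly as in sparse PIT (Klivans--Spielman): set $a_{ij}=s^{\,v_{ij}}$ for a scalar $s$ and a weight vector $\bar v$ from an explicit family of $\poly(n,l,\delta)^{O(r)}$ vectors, one of which keeps $M'$ nonzero with $s$-degree $\poly(n,l,\delta)^{O(r)}$, and let $s$ range over that many scalars. This gives $\mathcal{F}$ with $|\mathcal{F}|=\poly(n,l,\delta)^{O(r)}$, each member sending every $x_i$ to a linear form in $\bar y$, with at least one member faithful.

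The third step is the test itself. For each $\varphi\in\mathcal{F}$, the polynomial $\varphi(g)=C'(\varphi(x_1),\dots,\varphi(x_n))$ is $r$-variate of total degree $\le d$, and its value at $\bar\beta\in\mathbb{C}^{r}$ equals $C'$ evaluated at $(\varphi(x_1)(\bar\beta),\dots,\varphi(x_n)(\bar\beta))$, costing one oracle call plus $\poly(n)$ arithmetic. I would evaluate every $\varphi(g)$ on the grid $\{0,1,\dots,d\}^{r}$ and report ``$g\equiv0$'' iff all of these vanish. If $g=0$ then every $\varphi(g)$ is identically $0$; if $g\ne 0$ then by the reduction lemma the faithful $\varphi$ has $\varphi(g)\ne 0$, and a nonzero $r$-variate polynomial of individual degrees $\le d$ cannot vanish on the whole grid, so the answer is correct. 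The number of evaluations is $|\mathcal{F}|\cdot(d+1)^{r}$, each at cost $\poly(size(C'))$, which is $\poly(size(C')\cdot d\,l\,\delta)^{r}$ overall.

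I expect the second step to be the main obstacle. A \emph{generic} linear map into $r$ variables is already faithful---multiplying the rank-$r$ Jacobian by a generic $n\times r$ matrix preserves its rank---but such a map has $nr$ free parameters, and a naive hitting set for the nonzero polynomial $M'$ in those parameters has size exponential in $n$. The real work, and the only place the $l$-sparsity of the $f_i$ enters, is replacing this by a structured family of only $\poly(n,l,\delta)^{O(r)}$ choices that still avoids the zero set of $M'$.
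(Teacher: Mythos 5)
A preliminary remark: the paper does not prove Theorem~\ref{trd} at all --- it is quoted from \cite{bms} and used as a black box --- so there is no in-paper argument to compare yours against; the relevant benchmark is the proof in \cite{bms} (and its Jacobian-based reformulation in \cite{asss}). Your outline reconstructs exactly that strategy: reduce to $r$ variables by a trdeg-preserving (``faithful'') homomorphism, certify faithfulness through the Jacobian criterion in characteristic $0$, and finish by grid evaluation. Your step 1 (the minimal-annihilator argument showing $g=0 \Leftrightarrow \varphi(g)=0$ for faithful $\varphi$) and step 3 (evaluation on $\{0,\dots,d\}^r$ and the cost accounting) are correct and essentially the same as in \cite{bms,asss}.

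The gap is where you yourself point, and it is not a routine derandomization detail: step 2 as written does not go through, and it is precisely the lemma that carries the content of the theorem. First, the inference ``$M\not\equiv 0$ gives $M'\not\equiv 0$'' for a minor of $\bigl(\partial f_i/\partial x_j\bigr)\big|_{x=A\bar y}\cdot A$ needs an argument: the same variables $a_{ij}$ occur in both factors, so $A$ is not generic relative to the substituted Jacobian, and the rank of a product can drop even when each factor has full rank; this is the statement one must actually prove. Second, the sparsity claim for $M'$ in the $a_{ij}$ is false in general: substituting $x_i\mapsto\sum_j a_{ij}y_j$ into a single degree-$\delta$ monomial already creates on the order of $r^{\delta}$ distinct $\bar a$-monomials, which is not $\poly(n,l,\delta)^{O(r)}$, so a Klivans--Spielman hitting set of the claimed size for $M'$ is not available by this route. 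What \cite{bms,asss} actually do is apply the sparse-PIT machinery to the \emph{original} sparse minor $M(\bar x)$ (sparsity at most $r!\,l^{r}$), using maps of the form $x_i\mapsto c_i+(\text{Vandermonde-type linear form in }\bar y\text{ with one parameter }t)$, where $\bar c$ ranges over a hitting set guaranteeing $M(\bar c)\neq 0$: the constant shift lets one certify $\text{rank}\,\varphi(J)=r$ by evaluating at $\bar y=0$, and the Vandermonde structure in $t$ handles the multiplication by the linear part. Your purely linear maps $x_i\mapsto\sum_j s^{v_{ij}}y_j$ omit the shift, so even that certification step is unavailable (e.g.\ when no $f_i$ has a linear term the substituted Jacobian vanishes at $\bar y=0$). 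In short: a correct outline of the known proof, with the key construction --- an explicit $\poly(n,l,\delta)^{O(r)}$-size family guaranteed to contain a faithful map --- asserted rather than proved, and with the specific construction you propose needing to be replaced by the shifted Kronecker/Vandermonde one of \cite{bms}.
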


\begin{proof}[Proof of Theorem~\ref{Main}]
We claim that there exists an algorithm verifying that a given $\Sigma \Pi \Sigma \Pi(k, r)$-circuit is SG. Indeed, let $C$ be a circuit of the form (\ref{sg}).  We need to verify that 
$F_i$ belongs to  ideal $I:= \langle l_{1 j_1}, l_{2 j_2} \ldots, l_{ i-1, j_{i-1}}, l_{ i+1, j_{i+1}}, \ldots,   l_{ k, j_{k}} \rangle$ for every $i$ and for every  $l_{1 j_1}, l_{2 j_2} \ldots, l_{ i-1, j_{i-1}}, l_{ i+1, j_{i+1}}, \ldots,   l_{ k, j_{k}}$ (note that there are only $\poly(n,d)$ such conditions for constant $k$). By Theorem~\ref{mathoverflow} $F_i=l_{i,1} \cdot \ldots \cdot l_{i,d_i}$ belongs to $I$ iff
there exists   $\{ i_1, \ldots, i_f \} \subseteq \{1, \ldots, d\}$ such that $l_{i_1} \cdots l_{i_f} \in I$ where $f= f(k,r)$. So there are only $\poly(n,d)$ such conditions  for constant $k$ and $d$. One such condition can be verifiyed in polynomial time (here it is crucial that all these polynomials are homogeneous). Indeed, a homogeneous polynomial $A$ of degree $a$ belongs to $ \langle B_1, \ldots, B_s  \rangle$, where $B_i$ are homogeneous polynomial of degree $b$ if and only if there exists homogeneous polynomials $D_1, \ldots, D_s$ of degree $a-b$ such that  $A = B_1 \cdot D_1 + \ldots B_s \cdot D_s$. Hence, to verify that  $l_{i_1} \cdots l_{i_f} \in I$ we need to solve a system of $\poly(n,d)$ linear equations.   

Now assume we are given a SG $\Sigma \Pi \Sigma \Pi(k, r)$-circuit  (if a circuit is not SG then it is not identically zero). If Conjecture~\ref{con} holds then the trdeg of this circuit is constant. Then by Theorem~\ref{trd} there exists a polynomial (in $n$ and $d$) algorithm  solving PIT for this circuit. 
\end{proof}
\section{Proof of Conjecture~\ref{con} in a special case}
\label{proof_con}
We do not know the correctness of Conjecture~\ref{con} even for  $\Sigma \Pi \Sigma \Pi(3, 2)$ circuits. For this reason we consider a simple subclass of such circuits.  Namely, we consider circuits with the following property:  all ideals $\langle l_{ik},  l_{jt} \rangle$  for different $i$ and $j$ and for quadratic $ l_{ik}$,  $l_{jt}$ are \emph{prime}. Also, we  need that not all quadratic  polynomials $l_{ij}$ have the same index $i$. 
\begin{theorem}
\label{th_pr_con}
Conjecture~\ref{con} holds for such circuits.
\end{theorem}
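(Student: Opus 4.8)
The plan is to bound $\trdeg_{\mathbb C}\{l_{ij}\}$ by a sum of two vector–space dimensions and to control each. First, since $C$ is (by definition of SG) simple, the $F_i$ are automatically pairwise coprime: a common irreducible factor $g$ of $F_i$ and $F_j$, chosen twice in the SG condition for the third index $k$, would give $F_k\in\langle g,g\rangle=\langle g\rangle$, i.e.\ $g\mid F_k$, hence $g\mid\gcd(C)=1$. Split the factors of each $F_i$ into its set $L_i$ of linear factors and its set $Q_i$ of irreducible quadratic factors. Transcendence degree is subadditive and, for a set of linear forms, equals the dimension of their span, so
\[
\trdeg_{\mathbb C}\{l_{ij}\}\ \le\ \dim_{\mathbb C}\spa\Big(\textstyle\bigcup_i L_i\Big)+\dim_{\mathbb C}\spa\Big(\textstyle\bigcup_i Q_i\Big),
\]
and it suffices to bound each summand by an absolute constant. (If all factors are linear this is the case $r=1$, already covered by Theorem~\ref{main}; and if $n$ is below the constants appearing below then $\trdeg\le n+1$ is trivial, so we may assume $n$ large.) I would first record that each of $Q_1,Q_2,Q_3$ is nonempty: since not all quadratic factors share one index, at least two of them, say $Q_1,Q_2$, are nonempty; if $Q_3=\emptyset$ then $F_3$ — a nonzero product of linear forms — lies, by the SG condition for $i=3$ applied with some $q_1\in Q_1$, $q_2\in Q_2$, in the prime ideal $\langle q_1,q_2\rangle$, which contains no nonzero form of degree $<2$, so one of the linear factors of $F_3$ would be in $\langle q_1,q_2\rangle$, a contradiction.

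For the quadratics the plan is to exhibit a colored Sylvester–Gallai configuration and invoke Theorem~\ref{main}. Fix distinct colors $i,j$ and take $q_a\in Q_i$, $q_b\in Q_j$; by coprimality they are non‑associate, hence linearly independent, and $\langle q_a,q_b\rangle$ has no nonzero elements of degree $<2$ while its degree‑$2$ graded piece is exactly $\spa(q_a,q_b)$. Letting $k$ be the third color, the SG condition gives $F_k\in\langle q_a,q_b\rangle$, and since this prime ideal contains no nonzero linear form, some quadratic factor $q_c\in Q_k$ lies in it, hence in $\spa(q_a,q_b)$; again by coprimality $q_c$ is non‑associate to both, so $[q_a],[q_b],[q_c]$ are three distinct collinear points in the projective space of quadratic forms, one from each of the disjoint nonempty finite sets $Q_1,Q_2,Q_3$. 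Hence no line meets precisely two of these sets, and Theorem~\ref{main} gives $\dim\spa(Q_1\cup Q_2\cup Q_3)\le 4$, so $\trdeg_{\mathbb C}(\bigcup_i Q_i)\le 4$.

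The linear factors are the hard part. Fix a space $W\supseteq\spa(\bigcup_i Q_i)$ with $\dim W\le 4$. For $l\in L_1$ and any $q\in Q_3$, the SG condition for $i=2$ gives $F_2\in\langle l,q\rangle$; reducing modulo $l$ in the polynomial ring $\mathbb C[x]/(l)$, the nonzero quadratic $\bar q$ divides $\bar F_2$. If $\bar q$ is irreducible it must divide $\bar{q'}$ for some $q'\in Q_2$ (it cannot divide a linear factor, for degree reasons), forcing $\bar{q'}=c\bar q$ for a scalar $c$, i.e.\ $l\mid q'-cq$; since $q'-cq$ is a nonzero element of $W$, the form $l$ divides a nonzero member of the finite union of pencils $\{\,q'-cq:q'\in Q_2,\ c\in\mathbb C\,\}\subseteq W$. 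If instead $\bar q$ is reducible, then the restriction of the quadric $V(q)$ to the hyperplane $V(l)$ has rank $\le 2$, which forces $\operatorname{rank}(q)\le 4$. Cycling through the three colors, and supplementing with the SG conditions whose two generators are both linear (which place certain linear factors directly in the span of two others), every linear factor of $C$ either divides a nonzero member of one of finitely many pencils inside $W$, or is ``absorbed'' by a rank‑$\le 4$ quadratic occurring in some $Q_i$. It then remains to show: (i) the linear forms dividing nonzero members of these finitely many pencils span an $O(1)$‑dimensional space — here one exploits that $W$ is only $4$‑dimensional \emph{and} genuinely contains irreducible quadratics (so $W$ is not contained in the locus of reducible quadrics), which sharply bounds the reducible members of a pencil in $W$ and the linear forms they contribute; and (ii) the rank‑$\le 4$ quadratics occurring in the $Q_i$ lie in $\mathbb C[\ell_1,\dots,\ell_m]$ for an $O(1)$‑size set $\{\ell_t\}$ of linear forms, so the codimension‑$2$ linear spaces they can contain, and hence the linear factors hiding inside them, add only boundedly to $\dim\spa(\bigcup_i L_i)$.

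I expect (i)–(ii) to be the main obstacle: unlike for the quadratics, the SG condition with a linear generator is not a clean collinearity — the ideal $\langle l,q\rangle$ need not be prime, and the obligation $F_i\in\langle l,q\rangle$ can be discharged by a quadratic factor rather than pinning a linear factor into a two‑dimensional span, so one cannot simply discard the low‑rank quadratics but must quantify their effect. Granting (i)–(ii) one gets $\dim\spa(\bigcup_i L_i)=O(1)$, and combining with the quadratic bound,
\[
\trdeg_{\mathbb C}\{l_{ij}\}\ \le\ O(1)+4\ =:\ \lambda(3,2),
\]
which establishes Conjecture~\ref{con} for this subclass.
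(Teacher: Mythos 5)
Your handling of the quadratic factors is correct and coincides with the paper's first step: for $q_a\in Q_i$, $q_b\in Q_j$ the hypothesis that $\langle q_a,q_b\rangle$ is prime, together with the SG condition and the fact that this ideal contains no nonzero linear form, forces a quadratic factor of the third gate into $\spa(q_a,q_b)$, and Theorem~\ref{main} then bounds the span of all quadratic factors by a constant. The gap is everything after that: bounding the span of the \emph{linear} factors is the bulk of the theorem, and your proposal reduces it to the claims (i) and (ii), which you explicitly leave unproven (``granting (i)--(ii)''). As stated they do not follow from what precedes them. In your reducible case the conclusion $\text{rk}(q)\le 4$ constrains $q$ but says nothing about $l$ (one can in fact extract that $l$ must lie in the $\le 4$-dimensional essential space of $q$, but you do not, and the union of these spaces over the unboundedly many low-rank quadratic factors still has to be shown to span $O(1)$ dimensions). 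In (i) the number of pencils grows like $|Q_2|\cdot|Q_3|$, i.e.\ with the degree, so the real claim is that the linear forms dividing reducible members of a fixed $\le 4$-dimensional space of quadrics span a bounded-dimensional space --- plausible, but it requires a genuine argument and none is given. Finally, your remark that the SG conditions with two linear generators ``place certain linear factors directly in the span of two others'' is not true in general: that obligation can be discharged by a quadratic factor lying in $\langle l,l'\rangle$, which is exactly the troublesome situation, so a colored Sylvester--Gallai argument tolerating this exceptional behaviour is still needed and is not carried out.

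For comparison, the paper devotes most of its proof to precisely these missing pieces. It defines $L$ as the set of linear factors $l_{ij}$ for which some ideal $\langle l_{ij},l_{kt}\rangle$ (generated by two linear factors from different gates) contains a quadratic factor of the third gate, and proves $\dim\spa(L)\le 6$ (Lemma~\ref{dimL}) by a rank argument: the relevant quadratics have rank exactly $3$, they span a bounded space, and a symmetric-matrix row/column-deletion argument bounds the number of independent linear forms that can cut them down to rank $\le 2$. It then runs a modified Edelstein--Kelly argument, in the spirit of the proof of Theorem~\ref{main} but allowing the exceptional set $L$, to show $\dim\spa(T\cup L)<\dim\spa(L)+4$ for the remaining linear factors $T$, whose SG obligations genuinely produce collinear third points. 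Your items (i)--(ii) are a stand-in for Lemma~\ref{dimL}, and the final colored step for $T$ is absent altogether, so the proposal does not establish the theorem.
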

\begin{proof}
 Denote the set of quadratic polynomials $l_{ij}$ as $Q$. First, we prove that even the dimension of $\spa (Q)$  is bounded by a constant. 
Indeed, a quadratic polynomial $l$ belongs to $\langle l_1, l_2 \rangle$ where $l_1$, $l_2 \in Q$ iff $l$ is a linear combination of $l_1$ and $l_2$. Besides, every such ideal $\langle l_{ij}, l_{ks} \rangle$ where $l_{ij}, l_{ks} \in Q$ and $i\not=k$ \emph{must} contain a quadratic polynomial $l_{tu}$ where $t\not=i,k$ since the ideal $\langle l_{ij}, l_{ks} \rangle$ is prime. 
Hence $\dim (\spa(Q)) $  is at most $2$ by Theorem~\ref{main}. Here, it is important that $\langle l_1, l_2 \rangle$ is prime  and not all quadratic  polynomials $l_{ij}$ have the same index $i$.

Consider an ideal of the form $\langle l_1, l_2 \rangle$ where $l_1$ and $l_2$ are linear.  Recall, that this ideal is prime. 
Denote by $L$ the set of all $l_{ij}$ such that there exists $l_{kt}$ with $k\not=i$ such that the ideal $\langle l_{ij}, l_{kt} \rangle$ contains some quadratic $l_{fu}$ with $f\not=i, k$. 

\begin{lemma}
\label{dimL}
The dimension of $\spa(L)$ is at most $6$. 
\end{lemma}
\begin{proof}[Proof of Lemma~\ref{dimL}]\
\begin{enumerate}
\item  A quadratic homogeneous polynomial $f$ over $\mathbb{C}$ is irreducible iff $\text{rk}(f) \ge 3$. Here, $\text{rk}(f)$ is the  rank of $f$ as a quadratic form. Indeed, if  $\text{rk}(f) < 3$ then it is obvious that $f$ is not irreducible. To prove that in other cases $f$ is irreducible it is enough to show  that the  polynomial $x^2 + y^2 + z^2$ is irreducible and this is folklor. So, all elements of $Q$ have rank at least $3$. 

\item Denote by $Q'$ the subset of all elements of $q \in Q$ such that there exist $i$, $j$, $k$ and $t$  with $k\not=i$ s. t. $\langle l_{ij}, l_{kt} \rangle \in q$. Of course the dimension of $\spa(Q')$ is at most $2$ as the dimension of $\spa(Q)$. 

\item Consider some $l$, $m \in L$ and $q \in Q'$ such that $\langle l, m \rangle$ contains $q$. This means that the intersection of quadric $Q'$ with line $l$  is a quadric  with rank at most $2$. Therefore, $\text{rk}(q) \le 3$.  Combining this result with the first item we conclude that $\text{rk}(q) = 3$ for every $q \in Q'$.

\item Consider the largest linear independent subset in $L$.  Denote this set as $\{l_1, \ldots, l_t\}$ . We will show that $t \le 3 \cdot \text{dim}(\spa(Q')) \le 6$. This give us what we want.

\item Add new $l'_{t+1},\ldots, l'_n$ such that  $\{l_1, \ldots, l_t, l'_{t+1},\ldots, l'_n\}$ is a basis of the linear form from $x_1, \ldots, x_n$. Consider the (symmetric) matrices $A_1, \ldots, A_s$ of all  quadratics from $Q'$ in the dual basis of $\{l_1, \ldots, l_t, l'_{t+1},\ldots, l'_n\}$.

\item The rank of every $A_i$ is equal to $3$. Hence, there exist $3\cdot \text{dim}(\spa(Q'))$ numbers of rows such that  other rows are linearly depend from  these in \emph{every} matrix $A_i$. The same is true for columns since these matrices are symmetric. 

\item For every $l_j$ the exists $q \in Q'$ such that $l_i \cap q$ is a quadratic form of rank $2$. Hence, for every $i = 1, \ldots, t$ there exists a matrix $A_j$ such that matrix $A_{j,i}$  obtained from $A$ by deleting the $i$th row and the $i$th column has rank $2$. But from 6, it follows that there are at most $3 \cdot \text{dim}(\spa(Q'))$ such numbers $i$. Therefore $t \le 3\cdot \text{dim}(\spa(Q'))$. 
\end{enumerate}
\end{proof}
Add to the set $L$ the polynomials $l_{ij}$ that are linear combinations of $L$. 
 Lemma~\ref{dimL} shows that the dimension of  $L$ is not greater than $6$. 
We need to prove that the dimension of the span of the remaning linear polynomials is also bounded by a constant. Denote the set of such polynomials by $T$. 
The elements of $T$ have the following property. If $l_{ij} \in T$ and $l_{ts} \in T\cup L$ with $i \not=t$ then  there exist $l_{pu} \in T \cup L$ such that 
$p\not= i,t$  and $l_{pu}$ is a linear combination of $l_{ij}$ and $l_{ts}$.
Note that we can not say that if $l_{ij} \in T$ and $l_{ts} \in T$ then there exists $l_{pu} \in T$ that is a linear combination of $l_{ij}$ and $l_{ts}$, so we can not apply Theorem~\ref{main} directly. 
However the idea of the proof of Theorem~\ref{main} works.

We claim that dim$(\spa(T\cup L)) < \text{dim } (\spa(L)) + 4=10$. Together with dim $(\spa(Q)) \le 2$ this implis that $\trdeg$ of all polynomials $l_{ij}$ is less than $12$ (this proves the theorem). 

Assume that  dim$(\spa(T\cup L)) \ge \text{dim } (\spa (L)) + 4$.
Devide $T \cup L$ in three sets $S_1$, $S_2$ and $S_3$ in a natural way (in accordance with indexes $i$ of $l_{ij}$).
As in the proof of Theorem~\ref{main}, take $p_1 \in S_1$ and $p_2 \in S_2$.
Again we consider the pencil of $2$-dimensional planes with line $p_1 p_2$ as axis.
Since  dim$(\spa(T\cup L)) < \text{dim } (\spa (L)) + 4$, there exist $t_1, t_2, t_3, t_4 \in T$ such that $p_2t_1$, $p_2t_2$, $p_2t_3$ and $p_2t_4$ are linear independent and there are no points from $L$ in subspace $p_2 t_1 t_2  t_3 t_4$.    As in the proof of Theorem~\ref{main} we can conclude that there exist points $T_1$, $T_2 \in T$ such that  in the 3-space plane generated by $p_1$, $p_2$, $T_1$ and $T_2$ all points from $T\cup L$ belong to two $2$-spaces $p_1 p_2 T_1$ and $p_1 p_2 T_2$.

If $T_1$ and $T_2$ are from different $S_i$ then we get a contradict (there are no another points from $L \cup T$ at line $T_1T_2$).
 Otherwise, all points in 3-space $p_1 p_2 T_1 T_2 \setminus \{p,q\} $ belong to one $S_i$. Then we get a contradiction  considering line $T_1 p_1$ or $T_1 p_2$. 
\end{proof}

\section{Derandomization of PIT for some subclass of  $\Sigma \Pi \Sigma \Pi(3, 2)$ circuits}
\label{der}
In Theorem~\ref{th_pr_con} we have the strange condition that not all quadratic  polynomials $l_{ij}$ have the same index $i$. 
To cover this case we present an algorithm solving PIT for such circuits. 

More precisely, we consider $\Sigma \Pi \Sigma \Pi(3, 2)$-circuits of the form $F_1 + F_2 + F_3$,
where $F_1$ and $F_2$ are products of homogenous linear polynomials and
$F_3$ is a product of homogeneous quadratic and linear polynomials. 
\begin{theorem}
\label{derand}
There exists a polynomial-time algorithm solving PIT for such circuits. 
\end{theorem}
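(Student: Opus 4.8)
The plan is to reduce the general PIT question for a circuit $C = F_1+F_2+F_3$ of the stated form to two cases according to whether the $F_i$ are pairwise coprime, and then, in the coprime case, to bound the transcendence degree (in fact the linear span) of the linear and quadratic forms occurring in $C$ by a constant, after which the Schwartz--Zippel lemma (or Theorem~\ref{trd}) finishes the job. First I would handle the gcd: if some pair $(F_i,F_j)\ne 1$, then either $\gcd(F_i,F_j)\mid F_\ell$ for the third index and we divide all $F_i$ through by it (reducing to a smaller instance of the same kind, since dividing out linear/quadratic factors preserves the shape), or $\gcd(F_i,F_j)\nmid F_\ell$ and then $C\not\equiv 0$ and we output ``nonzero''. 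So assume $F_1,F_2,F_3$ pairwise coprime. If $C\equiv 0$ then $F_3\in\langle F_1,F_2\rangle$, and more strongly $F_3\in\langle l_{1a},l_{2b}\rangle$ for every choice of an irreducible factor $l_{1a}$ of $F_1$ and $l_{2b}$ of $F_2$; symmetric statements hold with the roles of the indices permuted. By Theorem~\ref{mathoverflow} (with $k=2$, $r=2$) each such membership is equivalent to a short sub-product of the $l_{3j}$ lying in $\langle l_{1a},l_{2b}\rangle$, and each of the $\poly(n,d)$ such conditions can be checked in polynomial time by solving a linear system over the homogeneous components. So the algorithm first verifies all these ideal-membership conditions; if any fails, $C\not\equiv 0$.

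The heart of the matter is: if all these conditions hold, is the transcendence degree of $\{l_{ij}\}$ bounded by an absolute constant? Here I would adapt the argument of Theorem~\ref{th_pr_con}, paying attention to exactly the case that theorem excluded, namely that all quadratic factors sit inside $F_3$. I would split the analysis as follows. First, the quadratic forms: all of them are factors of $F_3$, so $\langle l_{1a},l_{2b}\rangle\supseteq F_3 = (\text{linear forms})\cdot\prod_m q_m$ with $q_m$ irreducible quadratics. Since $\langle l_{1a},l_{2b}\rangle$ is prime (two linear forms), it must contain one of the factors; when it contains a quadratic $q_m$, that quadratic becomes linearly dependent on $l_{1a}$ and $l_{2b}$ modulo the ideal, forcing $\mathrm{rk}(q_m)$ small and $q_m$ to lie in a low-dimensional space spanned by pairs of the linear forms from $F_1,F_2$. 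I expect this pins the span of the relevant quadratics, together with the linear forms that ``interact'' with them, into a constant-dimensional space by a rank argument on symmetric matrices exactly as in Lemma~\ref{dimL} (the role previously played by $Q'$ now being played by the quadratic factors of $F_3$ that appear in some ideal $\langle l_{1a},l_{2b}\rangle$). Second, for the linear forms $l_{1a},l_{2b}$ themselves: each ideal $\langle l_{1a},l_{2b}\rangle$, being prime, must contain a linear factor $l_{3c}$ of $F_3$ whenever it does not swallow a quadratic, and $l_{3c}$ is then a linear combination of $l_{1a},l_{2b}$. This is precisely the colored Sylvester--Gallai configuration of Theorem~\ref{main} (with the $S_i$ indexed by $i=1,2,3$), modulo the previously-discussed fix that some ideals carry a quadratic instead of a linear point; those finitely many exceptional forms are already captured in the constant-dimensional quadratic part, so after quotienting them out the remaining linear forms genuinely satisfy the hypothesis of Theorem~\ref{main}, giving span dimension $\le 4$ for them. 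Adding up the (constant) contributions bounds $\trdeg\{l_{ij}\}$ by an absolute constant.

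Once the transcendence degree is bounded by a constant, I would invoke Theorem~\ref{trd}: writing $C = C'(f_1,\dots,f_m)$ where the $f_i$ range over a transcendence basis of the $l_{ij}$ (all of degree $\le 2$, $O(1)$-sparse) and the remaining $l_{ij}$ are algebraic over them, we get a $\poly(n,d)$-time blackbox test for $C\equiv 0$. Alternatively, and more elementarily, a bounded transcendence degree together with bounded individual degrees bounds the dimension of the span of all monomials appearing, so a direct Schwartz--Zippel evaluation on a small grid suffices; I would state it via Theorem~\ref{trd} for uniformity with the rest of the paper.

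The main obstacle, and the step I would spend the most care on, is the second paragraph: making the colored-Sylvester--Gallai argument go through when one of the ``colors'' (the forms from $F_3$) contains both linear and quadratic pieces, so that an ideal $\langle l_{1a},l_{2b}\rangle$ may be witnessed by either a collinear linear form or by a quadratic of low rank. The delicate point is to show these two witness types do not interfere: one must argue that the set of linear forms forced into small-dimensional span by ``quadratic witnesses'' is itself small (the rank argument of Lemma~\ref{dimL}), so that after removing it the purely-linear incidences form a bona fide Edelstein--Kelly/Motzkin configuration to which Theorem~\ref{main} applies. A secondary subtlety is the bookkeeping when $F_1+F_2+F_3$ is \emph{not} identically zero but still passes all the ideal-membership checks for two of the three indices — one must be sure the checks are run for all three cyclic choices so that a genuine nonzero circuit is never misclassified; handling the degenerate sub-cases (e.g. $F_3$ with no quadratic factors, or $F_1,F_2$ sharing a form after the gcd reduction) is routine but must be spelled out.
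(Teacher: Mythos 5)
Your overall route --- reduce to the pairwise-coprime case, verify the SG-type ideal memberships via Theorem~\ref{mathoverflow}, then claim that these conditions alone force $\trdeg\{l_{ij}\}$ to be constant, and finish with Theorem~\ref{trd} --- is genuinely different from the paper's proof, and its central step has a real gap. What your second paragraph asks for is exactly the case that Theorem~\ref{th_pr_con} excludes (all quadratic factors sitting in a single gate), and the paper handles this case by a direct argument precisely because it could not extend the Sylvester--Gallai argument to it. The transplant of Lemma~\ref{dimL} does not go through as sketched: that lemma's matrix-rank computation takes as input a constant bound on $\dim\spa(Q')$, and in Theorem~\ref{th_pr_con} this bound comes from applying Theorem~\ref{main} to quadratics lying in \emph{different} gates (a line through two quadratics of different colors must meet a third color). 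When every quadratic is a factor of $F_3$ there are no bi-colored pairs of quadratics, and your substitute claim --- that $q\in\langle l_{1a},l_{2b}\rangle$ pins $q$ into a low-dimensional space ``spanned by pairs of the linear forms from $F_1,F_2$'' --- is false as stated: the membership only says $q=A\,l_{1a}+B\,l_{2b}$ with $A,B$ \emph{arbitrary} linear forms, so such $q$ range over a space of dimension growing with $n$. Consequently neither the span of the ``witness'' quadratics nor the set of linear forms excused by quadratic witnesses is bounded by your argument, and the colored Sylvester--Gallai step you want to run afterwards has no foundation.

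The ingredient you are missing is the analysis of the \emph{mixed} pairs, i.e.\ the conditions $F_2\in\langle l_{1a},q\rangle$ and $F_1\in\langle l_{2b},q\rangle$ for a quadratic factor $q$ of $F_3$ --- or, as the paper does, the vanishing itself restricted to a linear factor: with $l_{11}=x$, $F_1+F_2+F_3=0$ gives $F_2|_{x=0}+F_3|_{x=0}=0$, so $q_{31}|_{x=0}$ must split into linear factors, and the same holds for the restriction to every linear factor $l$ of $F_1F_2$. (In your language: if $q|_{l=0}$ were irreducible, $\langle l,q\rangle$ would be prime and contain no linear forms besides multiples of $l$, so by coprimality the mixed membership would fail.) Since an irreducible quadratic has rank at least $3$, reducibility of all these restrictions confines every linear factor of $F_1$ and $F_2$ to the span of the variables of $q_{31}$, which has constant dimension (the paper argues $3$); this bounds the span of all linear forms of $F_1,F_2$ in one stroke, with no Edelstein--Kelly machinery at all. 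The paper then checks whether $F_3$ involves any further variables --- if so the circuit is nonzero, since $F_1+F_2$ does not depend on them --- and otherwise $\trdeg\le 3$ and Theorem~\ref{trd} finishes. Your plan also lacks this last step: bounding the span of the forms of $F_1,F_2$ from the membership conditions says nothing by itself about the factors of $F_3$, so some such check (or a separate argument) is needed before Theorem~\ref{trd} can be invoked.
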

\begin{proof}
Let $F_i = l_{i1} \cdot \ldots  $ for $i=1,2$ (here $l_{ij}$ are linear polynomials ) and $F_3 = q_{31} \cdot \ldots q_{3s} \cdot l_{31} \cdot \ldots l_{3r}$ 
(here $q_{3j}$ are irredicable quadratic and $l_{3j}$  are linear polynomials). We assume that $s \ge 1$(otherwise we can just use results of Section~\ref{naive}).

We can assume that $l_{11} = x$.  Then $F_1 + F_2 + F_3 = 0$ implies $F_{2}|_{x=0} + F_{3}|_{x=0} =0$. Since $F_{2}|_{x=0}$ is a product of linear polynomials,  $q_{31}|_{x=0}$ must be factorized. 
Hence (see the proof of Lemma~\ref{dimL}), the rank of $q_{31} = 3$. The polynomial $q_{31}|_{l_{ij}=0 }$ must be factorized for all linear polynomials from $F_1$ and $F_2$ (otherwise $F_1 + F_2 + F_3 \not= 0$). This implies that  the dimension of linear forms spanned by polynomials from $F_1$ and $F_2$ is at most $3$.

In other words $F_1$ and $F_2$ depends only on $3$ variables (after linear changing or variables). If $F_3$ depends  another variables, then a given circuit is not identically zero. Otherwise, 
 $\trdeg$ of all $l_{ij}$ and $q_{kt}$ is not greater than $3$. Hence, by Theorem~\ref{trd}  there exists a polynomial-time algorithm for such circuits.

\end{proof}

\section*{Acknowledgments}
I would like to thank Hailong Dao for useful discussions and Bruno Bauwens for help in writing this paper.

\end{document}